\newcommand{\ACC}{\mathsf{ACC}}
\newcommand{\BS}{\mathsf{BS}}
\newcommand{\Ctx}{\ensuremath{\mathsf{Ctx}}}
\newcommand{\FOL}{\ensuremath{\mathsf{FOL}}}
\newcommand{\KB}{\mathsf{KB}}
\newcommand{\app}{\mathsf{app}}
\newcommand{\mng}{\mathsf{mng}}
\newcommand{\br}{\mathsf{br}}
\newcommand{\kb}{\mathsf{kb}}
\newcommand{\pnot}{\mathsf{not}\ }
\newcommand{\U}{\mathcal U}
\newcommand{\myparagraph}[1]{\medskip\noindent\emph{#1}}
\begin{document}

\title{Integrity Constraints for General-Purpose Knowledge Bases}
\author{Lu\'\i s Cruz-Filipe\inst{1}
  \and Isabel Nunes\inst{2}
  \and Peter Schneider-Kamp\inst{1}}
\institute{
  Dept.\ of Mathematics and Computer Science,
  University of Southern Denmark
  \and Faculdade de Ci\^encias da Universidade de Lisboa, Portugal}

\maketitle

\begin{abstract}
  Integrity constraints in databases have been studied extensively
  since the 1980s, and they are considered essential to guarantee
  database integrity.
  In recent years, several authors have studied how the same notion
  can be adapted to reasoning frameworks, in such a way that they
  achieve the purpose of guaranteeing a system's consistency, but are
  kept separate from the reasoning mechanisms.

  In this paper we focus on multi-context systems, a general-purpose
  framework for combining heterogeneous reasoning systems,
  enhancing them with a notion of integrity constraints that
  generalizes the corresponding concept in the database world.
\end{abstract}

\section{Introduction}
\label{sec:intro}

Integrity constraints in databases have now been around for decades, and are universally
acknowledged as one of the essential tools to ensure database
consistency~\cite{Abiteboul1995}.
The associated problem of finding out how to repair an inconsistent database -- i.e.,
change it so that it again satisfies the integrity constraints -- was soon recognized as
an important and difficult one~\cite{Abiteboul1988}, which would unlikely be solvable in a
completely automatic way~\cite{Eiter1992}.

Since the turn of the century, much focus in research has moved from classical
databases to more powerful reasoning systems, where information is not all explicitly
described, but may be inferred by logical means.
In this setting, an important topic of study is how to combine the reasoning
capabilities of different systems, preferrably preserving the properties that make them
useful in practice -- e.g.~consistency, decidability of reasoning, efficient computation.
One of the most general frameworks to combine reasoning systems abstractly is that of
heterogeneous nonmonotonic multi-context systems~\cite{Brewka2007b}.
Besides being studied from a theoretical perspective, these have been implemented, and
many specialized versions have been introduced to deal with particular aspects deemed
relevant in practice~\cite{Brewka2011,DaoTran2010,Goncalves2014,Tasharrofi2014}.
In this work, we will work with relational multi-context systems~\cite{Fink2011}, a
first-order generalization of the original, propositional-based systems, which we
will refer to simply as multi-context systems, or MCSs.

As a very simple kind of reasoning system, databases can naturally be viewed as particular
cases of MCSs.
In this paper we propose to define integrity constraints in MCSs in a way
that naturally generalizes the usual definitions for relational databases.
Some authors have previously discussed modelling integrity constraints in MCSs, but their
approach differs substantially from the typical database perspective, as integrity
constraints are embedded \emph{into} the system, thereby becoming part of the reasoning
mechanism -- unlike the situation in databases, where they form an independent layer that
simply signals whether the database is in a consistent state.
We argue that integrity constraints for MCSs should also follow this principle, and show
how our approach is also in line with investigations on how to add integrity constraints
to other reasoning frameworks, namely description logic knowledge
bases~\cite{Fang2011,Motik2009}.
Due to the richer structure of MCSs, we can define two distinct notions of consistency
with respect to integrity constraints, which coincide for the case of databases.

We also address the problem of repairing an MCS that does not satisfy its integrity
constraints by moving to managed multi-context systems (mMCSs)~\cite{Brewka2011}, which
offer additional structure that helps defining the notion of repair.

\myparagraph{Contributions.}
Our main contribution is a uniform notion of integrity constraint over several
formalisms.
We define integrity constraints over an MCS, together with notions of
weak and strong satisfaction of these.
We show that the problem of deciding whether an MCS satisfies a set of integrity
constraints is polynomial-time reducible to the problem of deciding whether an MCS is
logically consistent (i.e., it has a model).
We show how our definition captures the traditional notion of integrity constraints over
relational databases, and how it naturally generalizes this concept to distributed
databases and deductive databases.
We also compare our definition with existing proposals for integrity constraints over
ontology languages.
Finally, we define repairs, and show how our definition again generalizes the traditional
concept in databases.

\myparagraph{Outline.}
In Section~\ref{sec:background} we introduce the framework of multi-context systems.
In Section~\ref{sec:ic} we define integrity constraints over MCSs, together with the
notions of weak and strong satisfaction.
We show how we can encode an MCS with integrity constraints as a different MCS, and obtain
decidability and complexity results for satisfaction of integrity constraints by reducing
to the problem of logical consistency.
In Section~\ref{sec:apps} we justify our
definition of
integrity constraint, by showing that it generalizes the usual concept in relational
databases, as well as other authors' proposals for ontology languages~\cite{Motik2009} and
peer-to-peer systems~\cite{Caroprese2007}.
We also show that it induces a natural concept of integrity constraint for distributed
databases, as well as providing a similar notion for deductive databases that is more
expressive than the usual one; and provide complexity results for these concrete cases.
In Section~\ref{sec:repair} we recall the notion of a database repair, and show how
repairs can be naturally defined in a simple extension of MCSs.
We conclude with an overview of our results and future directions in
Section~\ref{sec:conclusion}.


\subsection{Related work}

The topic of integrity constraints has been extensively studied in the literature.
In this section, we discuss the work that we feel to be more directly relevant to the
tasks we carry out in this paper.

Integrity constraints and updates -- ways of repairing inconsistent databases -- were
identified as a seminal problem in database theory almost thirty years
ago~\cite{Abiteboul1988}.
The case for viewing integrity constraints as a layer on top of the database, rather than
as a component of it, has been made since the 1980s.
The idea is that the data inconsistencies captured by integrity constraints need to be
resolved, but they should not interfere with the ability to continue using the database.
In this line, much work has been done e.g.~in query answering from inconsistent
databases~\cite{Chomicki1999,Chomicki2010}, by ensuring that the only answers generated
are those that hold in minimally repaired versions of the database.

The first authors to consider deductive databases~\cite{Asirelli1985,Gallaire1984} also
discussed this issue.
They identify three ways to look at deductive databases: by viewing the whole system as a
first-order theory; by viewing it as an extensional database together with integrity
constraints; and a mixed view, where some rules are considered part of the logic theory
represented by the database, and others as integrity constraints identifying preferred
models.
In~\cite{Asirelli1985}, it is argued that this third approach is the correct one, as it
cleanly separates rules that are meant to be used in logic inferencing from those that
only specify consistency requirements.

More recently, authors have considered adding integrity constraints to open-world systems
such as ontologies.
Although integrity constraints can be written in the syntax of terminological axioms,
the authors of~\cite{Motik2009} discuss why they should still be kept separate from the
logical theory.
Therefore, they separate the axioms in the T-Box (the deductive part of an ontology) into
two groups: reasoning rules, which are used to infer new information, and integrity
constraints, which only verify the consistency of the knowledge state without changing it.

The setting of multiple ontologies was considered in~\cite{Fang2011}, which considers the
problem of combining information from different knowledge sources while guaranteeing the
overall consistency, and preserving this consistency when one of the individual ontologies
is changed.
This is achieved by external integrity constraints, written in a Datalog-like syntax,
which can refer to knowledge in different ontologies in order to express relationships
between them.
Again, the purpose of these rules is uniquely to identify incompatibilities in the data,
and not to infer new information.

By contrast, the authors who have discussed multi-context systems have not felt the need
to take a similar approach.
Integrity constraints appear routinely in examples in
e.g.~\cite{Brewka2011b,Brewka2011,Eiter2014,Eiter2010}, but always encoded within the
system, so that their violation leads to logical inconsistency of the global knowledge
base.
Their work focuses rather on the aspect of identifying the sources of inconsistencies --
integrity constraints being only one example, not given any special analysis -- and ways
in which it can be repaired.

Although we believe this last work to be of the utmost importance, and show how
satisfaction of integrity constraints can be reduced to consistency checking (which in
turn implies that computing repairs can be reduced to restoring consistency), we strive
for the clean separation between integrity constraints and reasoning that is present in
other formalisms, and believe our proposal to be an important complement to the analysis
of inconsistency in MCSs.

\section{Background}
\label{sec:background}

We begin this section with a summary of the notion of relational multi-context
system~\cite{Fink2011}.
Intuitively, these are a collection of logic knowledge bases -- the
\emph{contexts} -- connected by Datalog-style \emph{bridge rules}.
The formal definition proceeds in several layers.
The first notion is that of \emph{relational logic}, an abstract notion of a logic with a
first-order sublanguage.

\begin{definition}
Formally, a relational logic $L$ is a tuple
$\langle\KB_L,\BS_L,\ACC_L,\Sigma_L\rangle$, where $\KB_L$ is the set
of well-formed knowledge bases of $L$ (sets of well-formed formulas), $\BS_L$ is a set of
possible belief sets (models), $\ACC_L:\KB_L\to 2^{\BS_L}$ is a function assigning to each
knowledge base a set of acceptable sets of beliefs (i.e., its models), and $\Sigma_L$ is a
signature consisting of sets $P^\KB_L$ and $P^\BS_L$ of predicate names (with associated
arity) and a universe $U_L$ of object constants, such that
$U_L\cap(P^\KB_L\cup P^\BS_L)=\emptyset$.
\end{definition}
If $p\in P^\KB_L$ has arity $k$ and $c_1,\ldots,c_k\in U_L$, then
$p(c_1,\ldots,c_k)$ must be an element of some knowledge base, and if
$p\in P^\BS_L$, then $p(c_1,\ldots,c_k)$ must be an element of some
belief set.
Therefore, we can view $\Sigma_L$ as a first-order signature
generating a sublanguage of $L$.
The elements in this sublanguage are called \emph{relational ground
  elements}, while the remaining
elements of knowledge bases or belief sets are called \emph{ordinary}.

\begin{example}
  \label{ex:FOL}
  We can see first-order logic over a first-order signature
  $\Sigma_\FOL$ as a logic
  $\FOL=\langle\KB_\FOL,\BS_\FOL,\ACC_\FOL,\Sigma_\FOL\rangle$, where
  $\KB_\FOL$ is the set of sets of well-formed formulas over
  $\Sigma_\FOL$, $\BS_\FOL$ is the set of first-order interpretations
  over $\Sigma_\FOL$, and $\ACC_\FOL$ maps each set of formulas to the
  set of its models.
  This logic only contains relational elements.
\end{example}

\begin{definition}
Let $\mathfrak I$ be a finite set of indices, $\left\{L_i\right\}_{i\in\mathfrak I}$ be a
set of relational logics, and $V$ be a set of (first-order) variables distinct from
predicate and constant names in any $L_i$.
A \emph{relational element} of $L_i$ has the form $p(t_1,\ldots,t_k)$, where
$p\in P^\KB_{L_i}\cup P^\BS_{L_i}$ has arity $k$ and each $t_j$ is a term from
$V\cup U_{L_i}$, for $1\leq j\leq k$.
A \emph{relational $k$-bridge rule} over $\left\{L_i\right\}_{i\in\mathfrak I}$ and $V$ is
a rule of the form
\begin{equation}
\label{eq:bridge}
(k:s) \leftarrow (c_1:p_1),\ldots,(c_q:p_q), \pnot (c_{q+1}:p_{q+1}),\ldots,\pnot(c_m:p_m)
\end{equation}
such that $k,c_i\in\mathfrak I$, $s$ is an ordinary or a relational knowledge base element
of $L_k$ and $p_1,\ldots,p_m$ are ordinary or relational beliefs of $L_{c_i}$.
\end{definition}
The notation $(c:p)$ indicates that $p$ is evaluated in context $c$.
These rules intuitively generalize logic programming rules, and as
usual in that context we impose a \emph{safety condition}: all
variables occurring in $p_{q+1},\ldots,p_m$ must also occur at least
once in $p_1,\ldots,p_q$.

\begin{definition}
A \emph{relational multi-context system} is a collection
$M=\left\{C_i\right\}_{i\in\mathfrak I}$ of contexts
$C_i=\langle L_i,\kb_i,\br_i,D_i\rangle$, where $L_i$ is a relational logic, $\kb_i$ is a
knowledge base, $\br_i$ is a set of relational $i$-bridge rules, and $D_i$ is a set of
import domains $D_{i,j}$, with $j\in\mathfrak I$, such that $D_{i,j}\subseteq U_j$.
\end{definition}
Import domains define which constants are exported from one context to another: as
the underlying logic languages can be different, these sets are essential to allow one
context to reason about individuals introduced in another.
We will assume that $D_{i,j}$ is the finite domain consisting of the object constants
appearing in $\kb_j$ or in the head of a relational bridge rule in $\br_j$, unless
otherwise stated.

\begin{example}
  \label{ex:mcs}
  Let $C_1$ and $C_2$ be contexts over the first-order logic $\FOL$
  with $\mathsf{R}$ and $\mathsf{Rt}$ binary predicates in
  $\Sigma_\FOL$, and let $\kb_1=\kb_2=\emptyset$.
  We can use the following bridge rules in $\br_2$ to define $\mathsf{Rt}$ in
  $C_2$ as the transitive closure of $\mathsf{R}$ in $C_1$.
  \[(2:\mathsf{Rt}(x,y)) \leftarrow (1:\mathsf{R}(x,y))
    \qquad
    (2:\mathsf{Rt}(x,y)) \leftarrow (1:\mathsf{R}(x,z)),(2:\mathsf{Rt}(z,y))
  \]
  We will use the MCS $M=\langle C_1,C_2\rangle$ to exemplify the
  concepts we introduce.
\end{example}

The semantics of relational MCSs is defined in terms of ground instances of bridge rules:
the instances obtained from each rule $r\in\br_i$ by uniform substitution of each variable
$X$ in $r$ by a constant in $\bigcap D_{i,j}$, with $j$ ranging over the indices of the
contexts to which queries containing $X$ are made in $r$.
\begin{definition}
A \emph{belief state} for $M$ is a collection $S=\left\{S_i\right\}_{i\in\mathfrak I}$
where $S_i\in\BS_i$ for each $i\in\mathfrak I$ -- i.e., a tuple of models, one for each
context.
The ground bridge rule~(\ref{eq:bridge}) is \emph{applicable} in a belief state $S$ if
$p_i\in S_{c_i}$ for $1\leq i\leq q$ and $p_i\not\in S_{c_i}$ for $q<i\leq m$.
The set of the heads of all applicable ground instances of bridge
rules of context $C_i$ w.r.t.\ $S$ is denoted by $\app_i(S)$.
An \emph{equilibrium} is a belief state $S$ such that $S_i\in\ACC_i(\kb_i\cup\app_i(S))$.
\end{definition}
Particular types of equilibria (minimal, grounded, well-founded)~\cite{Brewka2007b} can be
defined for relational MCSs, but we will not discuss them here.

\begin{example}
  \label{ex:equilibria}
  In the setting of the previous example, all equilibria of $M$ will have to
  include the transitive closure of $\mathsf{R}$ in $S_1$ in the
  interpretation of $\mathsf{Rt}$ in $S_2$.
  For example, if we take $S=\langle S_1,S_2\rangle$ with
  $S_1=\{\mathsf{R(a,b)},\mathsf{R(b,c)}\}$ and
  $S_2=\{\mathsf{Rt(a,b)},\mathsf{Rt(b,c)},\mathsf{Rt(a,c)}\}$, then $S$ is an
  equilibrium.
  However, $S'=\langle S_1,S'_2\rangle$ with
  $S'_2=\{\mathsf{Rt(a,b)},\mathsf{Rt(b,c)}\}$ is not an equilibrium, as
  it does not satisfy the second bridge rule.
\end{example}

Checking whether an MCS has an equilibrium is known as the \emph{consistency problem} in
the literature.
We will refer to this property as \emph{logical consistency} (to distinguish from
consistency w.r.t.~integrity constraints, defined in the next section) throughout this
paper.
This problem has been studied
extensively~\cite{Brewka2011b,Eiter2011,Eiter2014,Weinzierl2011}; its decidability depends
on decidability of reasoning in the underlying contexts.
The complexity of checking logical consistency of an MCS $M$ depends on the context
complexity of $M$ -- the highest complexity of deciding consistency in one of the contexts
in $M$ (cf.~\cite{Eiter2014} for a formal definition and known results).

\section{Integrity constraints on multi-context systems}
\label{sec:ic}

In their full generality, integrity constraints in databases can be arbitrary first-order
formulas, and reasoning with them is therefore undecidable.
For this reason, it is common practice to restrict their syntax in order to regain
decidability; our definition follows the standard approach of writing integrity constraints
in denial clausal form.

\begin{definition}
  Let $M=\langle C_1,\ldots,C_n\rangle$ be an MCS.
  An \emph{integrity constraint} over an MCS $M$ (in denial form) is a formula
  \begin{equation}
    \label{eq:ic}
    \leftarrow(i_1:P_1),\ldots,(i_m:P_m),\pnot(i_{m+1}:P_{m+1}),\ldots,\pnot(i_\ell:P_\ell)
  \end{equation}
  where $M=\langle C_1,\ldots,C_n\rangle$, $i_k\in\{1,\ldots,n\}$, each $P_k$ is a
  relational element of $C_{i_k}$, and the variables in $P_{m+1},\ldots,P_\ell$ all occur
  in $P_1,\ldots,P_m$.
\end{definition}

Syntactically, integrity constraints are similar to ``headless bridge rules''.
However, we will treat them differently: while bridge rules influence the semantics of
MCSs, being part of the notion of equilibrium, integrity constraints are meant to be
checked at the level of equilibria.

\begin{example}
  Continuing the example from the previous section, we can write an
  integrity constraint over $M$ stating that the relation $\mathsf{R}$ (in
  context $C_1$) is transitive.
  \begin{equation}
    \label{eq:ex-ic}
    \leftarrow(2:\mathsf{Rt}(x,y)),\pnot(1:\mathsf{R}(x,y))
  \end{equation}
\end{example}

The restriction on variables again amounts to the usual Logic Programming requirement that
bridge rules be safe.
To capture general tuple-generating dependencies we could relax this
constraint slightly, and allow $P_{m+1},\ldots,P_\ell$ to introduce
new variables, with the restriction that they can be used only once in
the whole rule.
This generalization poses no significant changes to the theory, but
makes the presentation heavier, and we will therefore assume safety.

\begin{definition}
  Let $M=\langle C_1,\ldots,C_n\rangle$ be an MCS and
  $S=\langle S_1,\ldots,S_n\rangle$ be a belief state for $M$.
  Then $S$ satisfies the integrity constraint~\eqref{eq:ic}
  if, for every instantiation $\theta$ of the variables in $P_1,\ldots,P_m$,
  either $P_k\theta\not\in S_k$ for some $1\leq k\leq m$ or $P_k\theta\in S_k$ for
  some $m<k\leq\ell$.
\end{definition}

In other words: equilibria must satisfy all bridge rules (if their body holds, then
so must their heads), but they may or may not satisfy all integrity constraints.
In this sense, integrity constraints express preferences among
equilibria.

\begin{example}
  The equilibrium $S$ from Example~\ref{ex:equilibria} does not
  satisfy the integrity constraint~\eqref{eq:ex-ic}, thus $M$ does not strongly
  satisfy this formula.
  However, $M$ weakly satisfies~\eqref{eq:ex-ic}, as seen by the
  equilibrium $S''=\langle S'_1,S'_2\rangle$ where $S'_2$ is as above and
  $S'_1=\{\mathsf{R(a,b)},\mathsf{R(b,c)},\mathsf{R(a,c)}\}$.
\end{example}

\begin{definition}
  Let $M$ be an MCS and $\eta$ be a set of integrity constraints.
  \begin{enumerate}
  \item $M$ \emph{strongly satisfies} $\eta$, $M\models_s\eta$, if:
    (i)~$M$ is logically consistent and
    (ii)~every equilibrium of $M$ satisfies all integrity constraints in $\eta$.
  \item $M$ \emph{weakly satisfies} $\eta$, $M\models_w\eta$, if there is an equilibrium
    of $M$ that satisfies all integrity constraints in $\eta$.
  \end{enumerate}
\end{definition}
We say that $M$ is (strongly/weakly) \emph{consistent} w.r.t.~a set of integrity
constraints $\eta$ if $M$ (strongly/weakly) satisfies $\eta$.
These two notions express different interpretations of integrity constraints.
Strong satisfaction views them as necessary requirements, imposing that all models of the
MCS to satisfy them.
Examples of these are the usual integrity constraints over databases, which express
semantic connections between relations that must always hold.
Weak satisfaction views integrity constraints as expressing preferences: the MCS may have
several equilibria, and we see those that do satisfy the integrity constraints as
``better''.

The distinction is also related to the use of brave (credulous) or cautious (skeptical)
reasoning.
If $M$ strongly satisfies a set of integrity constraints $\eta$, then any inferences we
draw from $M$ using brave reasoning are guaranteed to hold in some equilibrium that also
satisfies $\eta$.
If, however, $M$ only weakly satisfies $\eta$, then this no longer holds, and we can only
use cautious reasoning if we want to be certain that any inferences are still compatible
with $\eta$.

Both strong and weak satisfaction require $M$ to be logically consistent, so
$M\models_s\eta$ implies $M\models_w\eta$.
This implies that deciding whether $M\models_s\eta$ and $M\models_w\eta$ are both at least
as hard as deciding whether $M$ has an equilibrium -- thus undecidable in the general
case.\footnote{If consistency of one of $M$'s contexts is undecidable, then clearly the
  question of whether $M$ has an equilibrium is also undecidable.}
When logical consistency of $M$ is decidable and its set of equilibria is enumerable, weak
satisfaction is semi-decidable (if there is an equilibrium that satisfies $\eta$, we
eventually encounter it), while strong satisfaction is co-semi-decidable (if there is an
equilibrium that does not satisfy $\eta$, we eventually encounter it).
The converse also holds.

\begin{theorem}
  \label{thm:wdec}
  Weak satisfaction of integrity constraints is reducible to logical consistency.
\end{theorem}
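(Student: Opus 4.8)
The plan is to exhibit, for any MCS $M=\langle C_1,\ldots,C_n\rangle$ and finite set $\eta$ of integrity constraints, an MCS $M'=M'(\eta)$, constructible from $M$ and $\eta$ in polynomial time, such that $M'$ is logically consistent if and only if $M\models_w\eta$. The construction is the familiar one of turning constraint violation into global inconsistency: adjoin one fresh context that admits a model exactly when no constraint in $\eta$ is violated.

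Concretely, let $C_{n+1}=\langle L_{n+1},\emptyset,\br_{n+1},D_{n+1}\rangle$, where $L_{n+1}$ is a trivial propositional logic with a single atom $\mathsf{bad}$, $\BS_{n+1}=\{\emptyset,\{\mathsf{bad}\}\}$, and $\ACC_{n+1}$ maps $\emptyset$ to $\{\emptyset\}$ and every knowledge base containing $\mathsf{bad}$ to $\emptyset$. For each constraint $c\in\eta$ of the form~\eqref{eq:ic}, put into $\br_{n+1}$ the bridge rule
\[
  (n+1:\mathsf{bad})\leftarrow(i_1:P_1),\ldots,(i_m:P_m),\pnot(i_{m+1}:P_{m+1}),\ldots,\pnot(i_\ell:P_\ell),
\]
i.e.\ the body of $c$ with head $\mathsf{bad}$; this rule is safe because $c$ is. Choose the import domains $D_{n+1,j}$ so that the ground instances of these rules are in bijection with the instantiations $\theta$ quantified over in the definition of constraint satisfaction. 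Leave $C_1,\ldots,C_n$ untouched and set $M'=\langle C_1,\ldots,C_n,C_{n+1}\rangle$. Since no bridge rule of $C_1,\ldots,C_n$ mentions context $n+1$, the sets $\app_i(\,\cdot\,)$ for $i\le n$ are insensitive to the added context and to the $(n+1)$-st component of a belief state.

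The correctness argument then splits into two directions. For the ``if'' direction, take an equilibrium $S=\langle S_1,\ldots,S_n\rangle$ of $M$ that satisfies every constraint in $\eta$ and consider $S'=\langle S_1,\ldots,S_n,\emptyset\rangle$: because $S$ satisfies all constraints, no ground instance of a rule in $\br_{n+1}$ is applicable in $S'$, so $\app_{n+1}(S')=\emptyset$ and $\emptyset\in\ACC_{n+1}(\emptyset)$; and $S_i\in\ACC_i(\kb_i\cup\app_i(S'))$ for $i\le n$ by the observation above, whence $S'$ is an equilibrium of $M'$. For the ``only if'' direction, take an equilibrium $S'=\langle S_1,\ldots,S_n,S_{n+1}\rangle$ of $M'$; since $S_{n+1}\in\ACC_{n+1}(\kb_{n+1}\cup\app_{n+1}(S'))$ and $\ACC_{n+1}$ kills every knowledge base containing $\mathsf{bad}$, no ground instance of a rule in $\br_{n+1}$ is applicable, which — reading the construction backwards — says precisely that $\langle S_1,\ldots,S_n\rangle$ satisfies every constraint in $\eta$; and $\langle S_1,\ldots,S_n\rangle$ is an equilibrium of $M$, again because dropping the last component leaves $\app_i$ unchanged for $i\le n$. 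Hence $M\models_w\eta$. Finally, $M'$ has size linear in that of $M$ and $\eta$ and is plainly constructible in polynomial time, so this is a polynomial-time reduction.

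The step that I expect to need care is the correspondence between ground instances of the rules in $\br_{n+1}$ and the instantiations $\theta$ in the definition of ``$S$ satisfies~\eqref{eq:ic}''. Bridge-rule grounding ranges over the import domains $\bigcap D_{n+1,j}$, so one has to argue, using the safety of $c$ (every variable occurs in a positive atom that is actually queried), that it suffices to instantiate over the object constants occurring in the relevant source contexts, and to fix the $D_{n+1,j}$ accordingly, consistently with the framework's default-domain convention. Everything else is routine bookkeeping about $\app$ ignoring the newly added context.
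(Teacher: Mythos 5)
Your construction is essentially the paper's own proof: the paper adds a single fresh context $C_0$ with $\ACC_0(\emptyset)=\{\emptyset\}$ and $\ACC_0(\{\ast\})=\emptyset$, and turns each constraint into a bridge rule with head $(0:\ast)$, arguing exactly as you do that equilibria of $M'$ are precisely the equilibria of $M$ satisfying $\eta$ extended with an empty belief set for the new context. Your version is correct and only differs in naming ($\mathsf{bad}$ for $\ast$, $C_{n+1}$ for $C_0$) and in spelling out the grounding/import-domain bookkeeping that the paper leaves implicit.
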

\begin{proof}
  To decide whether $M\models_w\eta$,
  construct $M'$ by extending $M$ with a context $C_0$ where $\KB_0=\wp(\{\ast\})$,
  $\kb_0=\emptyset$, $\ACC_0(\emptyset)=\{\emptyset\}$, $\ACC_0(\{\ast\})=\emptyset$, and
  the bridge rules obtained by adding $(0:\ast)$ to the head of the rules in $\eta$.
  Then $M'$ has an equilibrium iff $M\models_w\eta$: any equilibrium
  of $M$ not satisfying $\eta$ corresponds to a belief state of
  $M'$ where $\app_0(S)=\{\ast\}$, which is never an equilibrium
  of $M'$; but equilibria of $M$ satisfying $\eta$ give rise to
  equilibria of $M'$ taking $S_0=\emptyset$.
  \qed
\end{proof}

\begin{theorem}
  \label{thm:sdec}
  Strong satisfaction of integrity constraints is reducible to logical inconsistency.
\end{theorem}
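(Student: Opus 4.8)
The plan is to mirror the construction used for Theorem~\ref{thm:wdec}, but to dualise the auxiliary context so that producing the marker $\ast$ becomes the \emph{only} way for the new context to have a model. Given $M=\langle C_1,\ldots,C_n\rangle$ and a set $\eta$ of integrity constraints, I would build $M'$ by adjoining a context $C_0$ with $\KB_0=\wp(\{\ast\})$, $\kb_0=\emptyset$, $\ACC_0(\emptyset)=\emptyset$ and $\ACC_0(\{\ast\})=\{\{\ast\}\}$, together with one bridge rule per constraint in $\eta$, obtained by placing $(0:\ast)$ as the head of that constraint's body. Since the constraints are safe, these are legal (safe) $0$-bridge rules; and since every rule we add is a $0$-bridge rule, the contexts of $M$ are left untouched, so the restriction to $\mathfrak I$ of any belief state of $M'$ behaves in $M'$ exactly as it does in $M$, and in particular is an equilibrium of $M$ whenever the whole state is an equilibrium of $M'$.

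The key step is then to show that $M'$ has an equilibrium if and only if $M$ has an equilibrium that \emph{violates} some constraint of $\eta$. For the forward direction, in an equilibrium $S'$ of $M'$ we need $S_0'\in\ACC_0(\app_0(S'))$; as $\ACC_0(\emptyset)=\emptyset$, this forces $\app_0(S')=\{\ast\}$, i.e.\ some ground instance of a rule built from $\eta$ is applicable in $S'$, which says precisely that the restriction of $S'$ to $M$ — an equilibrium of $M$ — falsifies the corresponding integrity constraint. The converse is symmetric: starting from an equilibrium of $M$ that violates some constraint, setting $S_0=\{\ast\}$ yields an equilibrium of $M'$.

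From this correspondence I would read off the reduction: $M'$ is logically inconsistent exactly when $M$ has no equilibrium at all, or every equilibrium of $M$ satisfies $\eta$; intersecting with ``$M$ is logically consistent'' gives $M\models_s\eta$ iff $M$ is logically consistent and $M'$ is logically inconsistent, both of which are (in)consistency queries. The point that needs care — and the main obstacle compared with the weak case — is exactly this bundling: strong satisfaction is a \emph{conjunction} of two conditions, logical consistency of $M$ and absence of a violating equilibrium, and the extension $M'$ can only witness the second (an extension of $M$ is inconsistent whenever $M$ is, so no single extension can have ``$M'$ inconsistent'' coincide with ``$M\models_s\eta$'' on all inputs). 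Hence the reduction must explicitly combine a consistency test on $M$ with an inconsistency test on $M'$ — equivalently, it is a Boolean combination of logical-inconsistency instances; when attention is restricted to logically consistent MCSs, the clean equivalence $M\models_s\eta\iff M'\text{ logically inconsistent}$ holds on its own.
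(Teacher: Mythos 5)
Your proposal is correct and follows essentially the same route as the paper: the same dualised auxiliary context $C_0$ with $\ACC_0(\emptyset)=\emptyset$ and $\ACC_0(\{\ast\})=\{\{\ast\}\}$, the same correspondence between equilibria of $M'$ and equilibria of $M$ violating $\eta$, and the same final observation that one must additionally test logical consistency of $M$ itself, so that $M\models_s\eta$ iff $M$ is consistent and $M'$ is inconsistent. Your explicit remark that the reduction is a conjunction of a consistency query on $M$ and an inconsistency query on $M'$ merely makes precise what the paper states through its case split on whether $M$ is consistent.
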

\begin{proof}
  Construct $M'$ as before, but now defining $\ACC_0(\emptyset)=\emptyset$,
  $\ACC_0(\{\ast\})=\{\{\ast\}\}$.
  If $M$ is inconsistent, then $M\not\models_s\eta$.
  If $M$ is consistent, then any equilibrium of $M$ satisfying $\eta$ corresponds to a
  belief state of $M'$ where $\app_0(S)=\emptyset$, which can never be an
  equilibrium of $M'$; in turn, equilibria of $M$ not satisfying $\eta$
  give rise to equilibria of $M'$ taking $S_0=\{\ast\}$.
  So if $M$ is consistent, then $M\models_s\eta$ iff $M'$ is inconsistent.
  \qed
\end{proof}

Combining the two above results with the well-known complexity results for consistency checking (Table 1 in \cite{Eiter2014}), we directly obtain the following results.

\begin{corollary}
The complexity of deciding whether $M \models_w \eta$ or $M \models_s \eta$, depending on the context complexity of $M$, $\mathcal{CC}(M)$, is given in Table~\ref{tab:complexity}.
\end{corollary}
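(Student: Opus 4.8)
The plan is to observe that both theorems~\ref{thm:wdec} and~\ref{thm:sdec} are \emph{polynomial-time} reductions, so that the complexity of deciding $M\models_w\eta$ (resp.\ $M\models_s\eta$) is no higher than that of deciding logical (in)consistency of an MCS with the same context complexity as $M$, and to then read off the exact bounds from Table~1 of~\cite{Eiter2014}. First I would check that the construction of $M'$ in each proof does not raise the context complexity: the new context $C_0$ has a constant-size knowledge base and an $\ACC_0$ function that is trivially polynomial (indeed constant) to evaluate, so $\mathcal{CC}(M')=\mathcal{CC}(M)$ for every relevant complexity class appearing in the table. Hence the upper bounds for both $\models_w$ and $\models_s$ coincide with the consistency-checking bounds listed there (for $\models_s$, via the complementary problem, which does not change the class since all entries are deterministic-time or closed under complement in the relevant sense; e.g.\ $\Sigma^p_k$-completeness of consistency gives $\Pi^p_k$ for strong satisfaction, or one simply states membership in the same level of the polynomial hierarchy).

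Next I would establish the matching lower bounds. For $\models_w$, hardness follows because logical consistency is the special case $\eta=\emptyset$: $M$ has an equilibrium iff $M\models_w\emptyset$, and this embedding is trivially polynomial and preserves context complexity, so weak satisfaction inherits the hardness of consistency checking. For $\models_s$, the same special case gives $M\models_s\emptyset$ iff $M$ is logically consistent, so strong satisfaction is at least as hard as consistency, and combined with the co-reduction of Theorem~\ref{thm:sdec} it lands exactly at the complement level (e.g.\ $\Pi^p_k$ when consistency is $\Sigma^p_k$). Putting the upper and lower bounds together yields completeness for the classes in Table~\ref{tab:complexity}, which I would populate by copying the structure of Table~1 of~\cite{Eiter2014} and replacing each consistency entry by the corresponding entry for $\models_w$ (unchanged) and $\models_s$ (the complement class).

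The main obstacle I anticipate is purely bookkeeping rather than mathematical: one must be careful that the reductions remain polynomial when $\eta$ is part of the input (the number of added bridge rules is $|\eta|$, and each acquires one extra body-free head atom, so the blow-up is linear), and that adding $C_0$ does not perturb the parameters on which the table entries depend — in particular, if the table distinguishes cases by whether contexts are deterministic, or by bounds on the number of contexts, one should note that $C_0$ is deterministic and adds only one context, so it falls within or below every relevant case. A secondary subtlety is the treatment of the very low-complexity rows of the table (e.g.\ the cases where consistency is in $\mathsf{P}$ or is $\mathsf{NP}$-complete): there one must confirm that the complement class for $\models_s$ is stated correctly (e.g.\ $\mathsf{coNP}$-complete, or still in $\mathsf{P}$), which again follows directly since $\mathsf{P}$ is closed under complement and the other rows sit at well-understood levels of the polynomial hierarchy. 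Once these checks are in place, the corollary is immediate.
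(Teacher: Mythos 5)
Your overall strategy (check that the constructions of Theorems~\ref{thm:wdec} and~\ref{thm:sdec} are polynomial and do not raise the context complexity, then read the entries off Table~1 of~\cite{Eiter2014}, with hardness for weak satisfaction via the special case $\eta=\emptyset$) is exactly the route the paper intends -- it offers no further argument beyond ``combining the two theorems with the known consistency results''. However, your treatment of the strong-satisfaction row contains a genuine error. You place $M\models_s\eta$ ``exactly at the complement level'' (e.g.\ $\Pi^p_k$ when consistency is $\Sigma^p_k$, coNP in the low rows), but that is not what the definition or Theorem~\ref{thm:sdec} give you: strong satisfaction requires \emph{both} (i)~that $M$ is logically consistent and (ii)~that every equilibrium satisfies $\eta$, and the reduction of Theorem~\ref{thm:sdec} only says that \emph{if $M$ is consistent} then $M\models_s\eta$ iff $M'$ is inconsistent. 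A decision procedure therefore has to combine a $\Sigma^p_i$ test (consistency of $M$) with a $\Pi^p_i$ test (inconsistency of $M'$), which is why Table~\ref{tab:complexity} lists $\Delta^p_2$, $\Delta^p_2$ and $\Delta^p_{i+1}$ in the first three columns rather than coNP or $\Pi^p_i$; your version of the table would be wrong there, and your claimed $\Pi^p_k$-completeness is unsupported in any case, since the lower bound you derive from $\eta=\emptyset$ is only $\Sigma^p_i$-hardness (consistency), not hardness for the complement class.

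Two smaller points. First, note that in Table~1 of~\cite{Eiter2014} consistency checking is already NP-complete when the context complexity is P (one must guess the set of applied bridge-rule heads), which is why the weak-satisfaction row starts at NP rather than P; your remark about ``rows where consistency is in P'' does not correspond to an actual row. Second, the corollary only claims the complexities stated in the table (upper bounds matching the known consistency bounds, which your reduction argument does deliver for the weak row and for PSPACE/EXPTIME, where complementation is harmless); it does not assert completeness in the $\Delta$-levels, so the completeness claims you add would need separate justification that neither you nor the paper provides.
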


\begin{table}[t]
\begin{center}
\begin{tabular}{r@{\hskip1em}|@{\hskip1em}c@{\hskip1em}c@{\hskip1em}c@{\hskip1em}c@{\hskip1em}c}
$\mathcal{CC}(M)$ & P & NP & $\Sigma^p_i$ & PSPACE & EXPTIME\\
\hline
$M \models_w \eta$ & NP & NP & $\Sigma^p_i$ & PSPACE & EXPTIME\\
$M \models_s \eta$ & $\Delta^p_2$ & $\Delta^p_2$ & $\Delta^p_{i+1}$ & PSPACE & EXPTIME\\
\end{tabular}
\end{center}
\caption{Complexity of integrity checking of an MCS in terms of its context complexity.}
\label{tab:complexity}
\end{table}

These results suggest an alternative way of modelling integrity constraints in MCSs:
adding them as bridge rules whose head is a special atom interpreted as
inconsistency.
This approach was taken in e.g.~\cite{Eiter2011}.
However, we believe that integrity constraints should be kept separate from the data, and
having them as a separate layer achieves this purpose.
In this way, we do not restrict the models of MCSs, and we avoid issues of logical
inconsistency.
Furthermore, violation of integrity constraints typically is indicative of some error in
the model or in the data, which should result in an alert and not in additional
inferences.

These considerations are similar to those made in Section~2.7 of~\cite{Motik2009} and in \cite{Fang2011}, in the (more restricted) context of integrity constraints over description logic knowledge bases. Likewise, the approach taken for integrity constraints in databases is that inconsistencies should be brought to the users' attention, but not affect the semantics of the database \cite{Abiteboul1988,Eiter1992}. In particular, it may be meaningful to work with reasoning systems not satisfying integrity constraints (see \cite{Chomicki2010} for databases and \cite{Eiter2010} for description logic knowledge bases). Our approach is also in line with \cite{Brewka2011}, where it is argued that in MCSs it is important to ``distinguish data from additional operations on it''.

\section{Applications of ICs for MCSs}
\label{sec:apps}

In this section we look at particular cases of MCSs with integrity
constraints.
We begin by showing that our notion generalizes the usual one for
standard databases.
Then we look into other types of databases and show how we obtain
interesting notions for these systems.

\subsection{Relational databases}

Integrity constraints in relational databases can be written as first-order
formulas in denial clausal form~\cite{Flesca2004} -- which are
essentially equivalent in form to bridge rules with no head.

\begin{definition}
  Let $DB$ be a database.
  The context generated by $DB$, $\Ctx(DB)$, is defined as follows.
  \begin{itemize}
  \item The underlying logic is first-order logic.
  \item Belief sets are sets of ground literals.
  \item The knowledge base is $DB$.
  \item For all $\kb$, the only belief set compatible with $\kb$ is
    $\ACC(\kb)=\kb^\vdash=\kb\cup\{\neg a\mid a\not\in\kb\}$.
  \item The set of bridge rules is empty.
  \end{itemize}
\end{definition}
We can see any database $DB$ as a single-context MCS consisting of
exactly the context $\Ctx(DB)$; we will also denote this MCS by
$\Ctx(DB)$, as this poses no ambiguity.
The only equilibrium for $\Ctx(DB)$ is $DB^\vdash$ itself,
corresponding to the usual closed-world semantics of relational databases.
Previous work (cf.~\cite{Brewka2011,Eiter2014}) implicitly treats
databases in this way, although $\Ctx$ is not formally defined.

Let $DB$ be a database and $r$ be an integrity constraint over $DB$ in
denial clausal form.
We can rewrite $r$ as an integrity constraint over $\Ctx(DB)$: if $r$
is
$\forall(A_1\wedge\ldots\wedge A_k\wedge\neg B_1\wedge\ldots\wedge\neg B_m\to\bot)$,
then $\br(r)$ is
\[\leftarrow(1:A_1),\ldots,(1:A_k),\pnot(1:B_1),\ldots,\pnot(1:B_m)\,.\]
Note that general tuple-generating dependencies require allowing singleton variables in
the $B_i$s, as discussed earlier.
The following result is straightforward to prove.
If we assume first-order logic with equality, we can also write equality-generating
constraints, thus obtaining the expressivity used in databases.

\begin{theorem}
  \label{thm:db}
  Let $DB$ be a database and $\eta$ be a set of ICs over $DB$.
  Then $DB$ satisfies all ICs in $\eta$ iff
  $\Ctx(DB)\models_s\br(\eta)$ iff $\Ctx(DB)\models_w\br(\eta)$, where $\br$ is
  extended to sets in the standard way.
\end{theorem}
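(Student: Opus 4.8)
The plan is to reduce both claimed equivalences to one purely definitional check. First I would use the fact, recorded just after the definition of $\Ctx$, that $\Ctx(DB)$ is a single-context MCS with no bridge rules, so $\app_1(S)=\emptyset$ for every belief state $S$ and the equilibria of $\Ctx(DB)$ are exactly the tuples $\langle S_1\rangle$ with $S_1\in\ACC(DB)=\{DB^\vdash\}$. Hence $\Ctx(DB)$ has a unique equilibrium, namely $DB^\vdash$, and is in particular logically consistent. Because there is only one equilibrium, clauses (i)--(ii) of strong satisfaction and the existential clause of weak satisfaction all assert the same thing: that $DB^\vdash$ satisfies every constraint in $\br(\eta)$. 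So $\Ctx(DB)\models_s\br(\eta)$ iff $\Ctx(DB)\models_w\br(\eta)$ iff $DB^\vdash$ satisfies $\br(\eta)$, and it remains to show that this last condition is equivalent to $DB$ satisfying $\eta$; since $\br$ is applied pointwise, it suffices to treat a single constraint.

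Next I would unwind both notions for one denial clause $r=\forall(A_1\wedge\cdots\wedge A_k\wedge\neg B_1\wedge\cdots\wedge\neg B_m\to\bot)$, whose translation $\br(r)$ is $\leftarrow(1:A_1),\ldots,(1:A_k),\pnot(1:B_1),\ldots,\pnot(1:B_m)$. On the database side, $DB$ satisfies $r$ iff there is no ground substitution $\theta$ making every $A_i\theta$ true and every $B_j\theta$ false under the closed-world reading of $DB$, i.e.\ no $\theta$ with $A_i\theta\in DB$ for $1\le i\le k$ and $B_j\theta\notin DB$ for $1\le j\le m$. On the MCS side, by the definition of satisfaction of~\eqref{eq:ic}, the belief state $\langle DB^\vdash\rangle$ fails $\br(r)$ iff there is a substitution $\theta$ with $A_i\theta\in DB^\vdash$ for all $i$ and $B_j\theta\notin DB^\vdash$ for all $j$. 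Since the $A_i$ and $B_j$ are positive relational atoms and $DB^\vdash=DB\cup\{\neg a\mid a\notin DB\}$, membership of a ground atom in $DB^\vdash$ is the same as membership in $DB$, so the two failure conditions coincide. Hence $DB$ satisfies $r$ iff $\langle DB^\vdash\rangle$ satisfies $\br(r)$, and quantifying over $r\in\eta$ closes the argument.

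I do not expect a genuine mathematical obstacle here; the whole proof is definition-chasing, which is presumably why the paper calls it straightforward. The only points that demand care are bookkeeping ones: (a) making the ranges of ground substitutions on the two sides literally agree — on the MCS side this rests on the safety condition on $\br(r)$, which forces every variable to occur among $A_1,\ldots,A_k$, matching the universal closure in the database reading, and on both sides one quantifies over the active domain of $DB$; and (b) being explicit that database IC satisfaction is taken under the closed-world assumption, so that ``$\neg B_j$ holding in $DB$'' means by definition $B_j\notin DB$. If one additionally admits equality-generating or genuine tuple-generating dependencies, as remarked before the theorem, the same argument applies verbatim once $\FOL$ is read with equality and $\br(r)$ is allowed the singleton-variable relaxation of safety.
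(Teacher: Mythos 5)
Your proof is correct and matches the argument the paper intends but leaves implicit (it states the result is ``straightforward to prove'' and only remarks afterwards that weak and strong satisfaction coincide because every database has exactly one equilibrium): you use exactly that uniqueness of the equilibrium $DB^\vdash$ to collapse $\models_s$ and $\models_w$, and then the observation that membership of a positive ground atom in $DB^\vdash$ coincides with membership in $DB$ to identify satisfaction of $\br(r)$ with closed-world satisfaction of $r$. Your bookkeeping remarks on safety and the range of instantiations are appropriate and do not change the substance.
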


In this setting, weak and strong satisfaction of integrity constraints
coincide, as every database has exactly one equilibrium.
Furthermore, deciding whether $\Ctx(DB)\models\br(\eta)$ can be done
in time $O(|DB|\times|\eta|)$, where $|DB|$ is the number of elements
in $DB$ and $|\eta|$ is the total number of literals in all integrity
constraints in $\eta$. This means that the data complexity~\cite{Vardi1982} of this
problem is linear, as we can query the database using the open bridge
rules in $\eta$, rather than considering the set of all ground instances of
those rules.

Theorem~\ref{thm:db} could be obtained by adding integrity constraints as bridge rules
with a special inconsistency atom, as discussed earlier, and done in~\cite{Eiter2011}).
This would significantly blur the picture, though, as in principle nothing would
prevent us from writing integrity constraints referencing the inconsistency atom
in their body, potentially leading to circular reasoning. Our approach guarantees
that there is no such internalization of inconsistencies into the database.

Our results show that the notion of integrity
constraint we propose directly generalizes the traditional notion of
integrity constraints over databases~\cite{Abiteboul1988}.

\subsection{Distributed DBs}
\label{sec:distr-db}

Distributed databases are databases that store their information at
different sites in a network, typically including information that is
duplicated at different nodes~\cite{Ullman1988} in order to promote
resilience of the whole system.

A distributed database consisting of individual databases
$DB_1,\ldots,DB_n$ can be modeled as an MCS with $n$ contexts
$\Ctx(DB_1),\ldots,\Ctx(DB_n)$.
The internal consistency of the database, in the sense that tables
that occur in different $DB_i$s must have the same rows, can be
specified as integrity constraints over this MCS as follows.
For each relation $p$, let $\gamma(p)$ be the number of columns of $p$
and $\delta(p)$ be the set of indices of the databases containing $p$.
Then
\[\{\leftarrow(i:p(x_1,\ldots,x_{\gamma(p)})), \pnot(j:p(x_1,\ldots,x_{\gamma(p)}))\mid i,j\in\delta(p), \mbox{$p$ is a relation}\}\]
logically specifies the integrity of the system.
Different strategies for fixing inconsistencies in distributed
databases (e.g.~majority vote or siding with the most recently updated
node) correspond to different preferences for choosing repairs in the
sense of the next section.

Again, such integrity constraints can be written as bridge rules in the form
\[(j:p(x_1,\ldots,x_{\gamma(p)})) \leftarrow (i:p(x_1,\ldots,x_{\gamma(p)}))\,.\]
but these significantly change the semantics of the database: instead of
describing preferred equilibria, they impose a flow of information between nodes. 


\begin{example}
\label{ex:cpr}
Consider a country with a central person register (CPR), mapping a unique identifying
number to the name and current address of each citizen using a relation ${\sf person}$,
e.g.\ ${\sf person}(1111111118,{\it old\_lady},{\it gjern})$.
Furthermore, each electoral district keeps a local voter register using a relation
${\sf voter}$, e.g.\ ${\sf voter}(1111111118)$, and a list of addresses local to the given
electoral district using a relation ${\sf address}$,
e.g.~${\sf address}({\it gjern})$.
Then the integrity constraints 
\begin{align}
&\leftarrow {\sf Skborg}: {\sf voter}({\it Id}), \pnot({\sf CPR}: {\sf person}({\it Id}))\\
&\leftarrow {\sf Skborg}: {\sf voter}({\it Id}), {\sf CPR}: {\sf person}({\it Id},{\it Add}), \pnot({\sf Skborg} : {\sf address}({\it Add})) \label{ic1}
\end{align}
ensure that all voters registered in the Silkeborg electoral district are registered in
the central person register, and that they are registered with an address that is local to
the Silkeborg electoral district.
Here, we are implicitly assuming that the database is closed under projection, and
overload the {\sf person} relation for the sake of simplicity.
In addition, the following set of integrity constaints models the fact that each person
registered in the Silkeborg electoral district is not registered in any other electoral
districts from the set ${\it ED}$.
\[\{\leftarrow {\sf Skborg}: {\sf voter}({\it Id}), C_i : {\sf voter}({\it Id}) \mid C_i \in{\it ED} \setminus \{ {\sf Skborg} \} \}\]
\end{example}

This assumption of closure under projection is meaningful from a
practical point of view, and has been implemented e.g.\ in~\cite{KMIS2015}.
Alternatively, we could define the projections as bridge rules of the MCSs, in line with
the idea of encoding views of deductive databases presented in the next section.

This section's treatment of distributed databases is equivalent to considering their
disjoint union as a database.
Consequently, there is no need to use MCSs for distributed databases, but this mapping
shows that our notion of integrity constraints abstracts the practice in this field.
Furthermore, results in previous work~\cite{LCF2014} indicate that the processing of
integrity constraints can be efficiently parallelized in this disjoint scenario, given
suitable assumptions.

\subsection{Deductive DBs}

We now address the case of deductive databases.
These consist of two different components: the (extensional) fact
database, containing only concrete instances of relations, and the
(intensional) rule database, containing Datalog-style rules defining
new relations.
Every relation must be either intensional or extensional, unlike in
e.g.~full-fledged logic programming.

One standard way to see the intensional component(s) of deductive
databases is as \emph{views} of the original database.
The instances of the new relations defined by rules are generated
automatically from the data in the database, and these
relations can thus be seen as content-free, having a purely
presentational nature.
For simplicity of presentation, we consider the case where there
is one single view.

\begin{definition}
  Let $\Sigma_E$ and $\Sigma_I$ be two disjoint first-order signatures.
  A \emph{deductive database} over $\Sigma_E$ and $\Sigma_I$ is a pair
  $\langle DB,R\rangle$, where $DB$ is a relational database over
  $\Sigma_E$ and $R$ is a set of rules of the form
  $p\leftarrow q_1,\ldots,q_n$,
  where $p$ is an atom of $\Sigma_I$ and $q_1,\ldots,q_n$ are atoms
  over $\Sigma_E\cup\Sigma_I$.
\end{definition}
More precisely, this definition corresponds to the definite deductive databases
in~\cite{Gallaire1984}; we do not consider the case of indefinite databases in this work.
We can view deductive databases as MCSs.

\begin{definition}
  Let $\langle DB,R\rangle$ be a deductive database over $\Sigma_E$
  and $\Sigma_I$.
  The MCS induced by $\langle DB,R\rangle$ is
  $M=\langle C_E,C_I\rangle$, where $C_E=\Ctx(DB)$ defined as
  above and $C_I=\Ctx(R)$ is a similar context where:
  \begin{itemize}
  \item The knowledge base is $\emptyset$.
  \item For each rule $p\leftarrow q_1,\ldots,q_n$ in $R$ there is a
    bridge rule $(I:p)\leftarrow (i_1:q_1),\ldots,(i_n:q_n)$ in $\Ctx(R)$,
    where $i_k=E$ if $q_k$ is an atom over $\Sigma_E$ and $i_k=I$ otherwise.
  \end{itemize}
\end{definition}

Integrity constraints over such MCSs correspond precisely to the definition of integrity
constraints over deductive databases from~\cite{Asirelli1985}.
By combining this with the adequate notion of repair, we capture the typical constraints
of deductive databases -- that consistency can only be regained by changing extensional
predicates -- in line with the traditional view-update problem.
More modern works~\cite{Caroprese2012} restrict the syntax of integrity constraints,
allowing them to use only extensional relations; in the induced MCS, this translates to
the additional requirement that only relational elements from $C_E$ appear in the body
of integrity constraints.

\begin{example}
Consider a deductive database for class diagrams, where information about direct
subclasses is stored in the extensional database using a relation ${\sf isa}$,
e.g.~${\sf isa}({\it list},{\it collection})$ and ${\sf isa}({\it array},{\it list})$.
Intensionally, we model the transitive closure of the subclass relation using a view
created by the two rules ${\sf sub}(A,B) \leftarrow {\sf isa}(A,B)$ and
${\sf sub}(A,C) \leftarrow {\sf isa}(A,B), {\sf sub}(B,C)$, thus allowing us to find out
that in our example ${\sf sub}({\it array},{\it collection})$.
The integrity constraint
\[\leftarrow{\sf sub}(A,A)\]
can then be used to state the acyclicity of the subclass relation.
Integrity constraints restricted to the extensional database could not express
this, as there would be no way to define a fixpoint.
The only (incomplete) solution would be to add $n$ integrity constraints
disallowing cycles of length up to $n$.
This example illustrates our gain of expressive power compared to the approach
in~\cite{Caroprese2012}.
\end{example}

We can also consider databases with several, different views, each view generating a
different context.
Integrity constraints over the resulting MCS can then specify relationships between
relations in different views.

Yet again, the complexity of verifying whether an MCS induced by a deductive database
satisfies its integrity constraints is lower than the general case.
In particular, consistency checking is reducible to query answering (all integrity
constraints are satisfied iff there are no answers to the queries expressed in their
bodies).
If we do not allow negation in the definition of the intensional relations, then there is
only one model of the database as before, and consistency checking w.r.t.\ a fixed set of integrity constraints
is PTIME-complete \cite{Schlipf1995}.
In the general case, weak and strong consistency correspond, respectively, to brave and
cautious reasoning for Datalog programs under answer set semantics, which are known to
be co-NP-complete and NP-complete, respectively.

\subsection{Peer-to-peer systems}
\label{sec:p2p}

Peer-to-peer (P2P) networks are distributed systems where each node (the peer) has an identical status in the hierarchy, i.e., there is no
centralized control. Queries can be posed to each peer, and peers communicate amongst themselves in order to produce the desired answer. For a general overview see e.g.~\cite{Pourebrahimi05asurvey}.

A particularly interesting application are P2P systems, which integrate features of both
distributed and deductive databases.
We follow~\cite{Caroprese2007}, which also addresses the issue of integrity constraints.
In this framework, P2P systems consist of several nodes (the peers), each of them a
deductive database of its own, connected via \emph{mapping rules} that port
relations from one peer to another.

\begin{definition}
  A peer-to-peer system $\mathcal P$ is a set of peers
  $\mathcal P=\{P_i\}_{i=1}^n$.
  Each peer is a tuple $\langle \Sigma^i,DB_i,R_i,M_i,IC_i\rangle$, where:
  \begin{itemize}
  \item $\Sigma^i$ is the disjoint union of three signatures
    $\Sigma^i_E$, $\Sigma^i_I$ and $\Sigma^i_M$;
  \item $\langle DB_i,R_i\rangle$ is a deductive database over signatures
    $\Sigma^i_E$ and $\Sigma^i_I$, where the rules in $R_i$ may also
    use relations from $\Sigma^i_M$;
  \item $M_i$ is a set of mapping rules of the form
    $p\leftarrow_j q_1,\ldots,q_m$ with $j\neq i$, where $p$
    is an atom over a signature $\Sigma^i_M$ and each $q_k$ is an atom
    over $\Sigma^j$;
  \item $IC_i$ is a set of integrity constraints over $\Sigma^i$.
  \end{itemize}
\end{definition}
Intuitively, relations can be defined either extensionally (those in
$\Sigma_E$), intensionally (those in $\Sigma_I$) or as mappings from
another peer (those in $\Sigma_M$), and these definitions may not be
mixed.
Observe that, with these definitions, negations may only occur in the bodies of the
integrity constraints.

We can view a P2P system as a MCS with integrity constraints.
To simplify the construction, we adapt the definition from the case of deductive databases
slightly, so that there is a one-to-one correspondence between peers and contexts.

\begin{definition}
  Let $\mathcal P=\{P_i\}_{i=1}^n$ be a P2P system.
  The MCS induced by $\mathcal P$ is defined as follows.
  \begin{itemize}
  \item There are $n$ contexts, where $C_i$ is constructed as
    $\Ctx(DB_i)$ together with the following set of bridge rules:
    \begin{itemize}
    \item $(i:p)\leftarrow(i:q_1),\ldots,(i:q_m)$ for each rule
      $p\leftarrow q_1,\ldots,q_m\in R_i$;
    \item $(i:p)\leftarrow(j:q_1),\ldots,(j:q_m)$ for each rule
      $p\leftarrow_j q_1,\ldots,q_m\in M_i$.
    \end{itemize}
  \item Each integrity constraint $\leftarrow q_1,\ldots,q_m$ in
    $IC_i$ is translated to the integrity constraint
    $\leftarrow(i:q_1),\ldots,(i:q_m)$, where we take $(i:\neg q)$ to
    mean $\pnot(i:q)$.
  \end{itemize}
\end{definition}
The definition of the bridge rules from $R_i$ is identical to what one would
obtain by constructing the context $\Ctx(R_i)$ described in the
previous section.

This interpretation does not preserve the semantics for
P2P systems given in~\cite{Caroprese2007,Caroprese2014}.
Therein, mapping rules can only be applied if
they do not generate violations of the integrity constraints.
This is directly related to the real-life implementation of these systems, where this
option represents a ``cheap'' strategy to ensure local enforcement of integrity
constraints; as discussed in~\cite{Weinzierl2011}, the underlying philosophy of P2P
systems and MCSs is significantly different.

We now show that, while the semantics differ, there is a correspondence between
P2P systems and their representation as an MCS, and the ``ideal'' models of both coincide.
When no such models exist, the MCS formulation can be helpful in identifying the
problematic mapping rules.

The semantics of P2P systems implicitly sees them as logic programs.
\begin{definition}
  Let $\mathcal P=\{P_i\}_{i=1}^n$ be a P2P system and $I$ be a Herbrand interpretation
  over $\bigcup\Sigma^i$.
  The program $\mathcal P^I$ is obtained from $\mathcal P$ by (i)~grounding all rules and
  (ii)~removing the mapping rules whose head is not in $I$.

  Let $\mathcal{MM}(P)$ denote the minimal model of a logic program.
  A \emph{weak model} for $\mathcal P$ is an interpretation $I$ such that
  $I=\mathcal{MM}(\mathcal P^I)$.
\end{definition}
Since integrity constraints are rules with empty head, this definition implicitly requires
weak models to satisfy them.
Interpretations over a P2P system and equilibria over the induced MCS are trivially in
bijection, as the latter simply assign each atom to the right context, and we implicitly
identify them hereafter.
We can relate the ``perfect'' models in both systems.

\begin{theorem}
  Let $\mathcal P$ be a P2P system, $I$ an interpretation for $\mathcal P$, and
  $M$ the induced MCS.
  Then $I=\mathcal{MM}(\mathcal P)=\mathcal{MM}(\mathcal P^I)$ iff $I$ is an equilibrium
  for $M$ satisfying all the integrity constraints.
\end{theorem}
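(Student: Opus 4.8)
The plan is to turn both sides of the biconditional into assertions about the least fixpoint of an immediate-consequence operator, and then reconcile the three programs in play: $\mathcal{P}$, the reduct $\mathcal{P}^I$, and the logic program hidden inside the bridge rules of $M$. Throughout I identify an interpretation $I$ with the corresponding belief state, as noted just before the statement. First I would unfold the equilibrium condition for $M$. Since every context of the induced MCS is a $\Ctx$-context, $\ACC_i$ maps each knowledge base $\kb$ to the single belief set $\kb^\vdash$; hence a belief state $S$ is an equilibrium iff $S_i = (DB_i\cup\app_i(S))^\vdash$ for every $i$, i.e.\ iff $I$ is a fixpoint of the operator $T_M$ that adds to $I$ the heads of all ground bridge-rule instances whose body holds in $I$. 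The ground bridge rules of $M$ are exactly the ground instances of the rules in the $R_i$ and $M_i$ (the extensional facts sitting in the knowledge bases), so $T_M$ is the immediate-consequence operator of the logic program $\mathcal{P}^{-}$ obtained from $\mathcal{P}$ by deleting its integrity constraints; as negation occurs only in integrity constraints, $\mathcal{P}^{-}$ is definite, $T_M$ is monotone, and $\mathcal{MM}(\mathcal{P}^{-})$ is its least fixpoint. This least fixpoint is the grounded equilibrium of $M$, and I would note that the right-to-left direction of the theorem needs ``equilibrium'' to be read in this sense — the bridge rules being definite, the $\subseteq$-least fixpoint is the canonical one — since an arbitrary fixpoint of $T_M$ is also an equilibrium and a non-minimal supported model would otherwise be a spurious witness.

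Next I would fold in the integrity constraints. Reading an integrity constraint as an empty-head rule, a program of the form ``definite part plus constraints'' has a model iff the least model of its definite part falsifies every instantiated constraint body — that is, satisfies every constraint in the sense of the definition in Section~\ref{sec:ic} — and in that case the minimal model of the whole program coincides with that of its definite part. Applied to $\mathcal{P}$, this gives one half of the equivalence directly: $I = \mathcal{MM}(\mathcal{P})$ iff $I = \mathcal{MM}(\mathcal{P}^{-})$ (the grounded equilibrium of $M$) and $I$ satisfies every integrity constraint, which is exactly ``$I$ is an equilibrium of $M$ satisfying all the integrity constraints''.

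It then remains to show that the clause ``$=\mathcal{MM}(\mathcal{P}^I)$'' in the statement is not an extra requirement. The key observation is about which rules the reduct deletes: $\mathcal{P}^I$ removes precisely the ground mapping rules whose head is not in $I$, and if $I = \mathcal{MM}(\mathcal{P})$ then $I$ is a model of every ground mapping rule, so any such deleted rule must have a body that is not contained in $I$. A rule whose body is not contained in $I$ never fires in the bottom-up iteration computing $\mathcal{MM}(\mathcal{P}^{-})$, because every stage of that iteration is contained in $I$; hence deleting all of them leaves the least fixpoint unchanged, so $\mathcal{MM}\big((\mathcal{P}^I)^{-}\big) = \mathcal{MM}(\mathcal{P}^{-}) = I$, and since $I$ still satisfies every integrity constraint, the previous paragraph gives $\mathcal{MM}(\mathcal{P}^I) = I$. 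Thus $I = \mathcal{MM}(\mathcal{P})$ already entails $I = \mathcal{MM}(\mathcal{P}^I)$; the opposite packaging of the hypotheses is chased through the same observations. Stitching these steps together yields the biconditional.

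The bookkeeping part — matching $\app_i(S)$ and the ground bridge-rule instances of $M$ with the ground rule instances of $\mathcal{P}$, keeping extensional atoms (supplied by the knowledge bases) cleanly apart from intensional and mapping atoms (produced by bridge rules) — is routine but must be done carefully. The genuinely delicate point, and the one I expect to be the crux, is the combination of the redundancy argument above with the caveat about equilibria: one has to be convinced that ``$\mathcal{MM}(\mathcal{P}) = \mathcal{MM}(\mathcal{P}^I)$'' is not strictly stronger than ``$I$ is a weak model and $I = \mathcal{MM}(\mathcal{P})$'', and one has to fix which fixpoint of $T_M$ is meant by ``equilibrium'' so that the right-to-left implication actually goes through.
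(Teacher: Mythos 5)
Your proposal is correct and follows the same overall skeleton as the paper's proof (equilibria of the induced MCS versus models of the program, integrity constraints checked on top, and an argument that the reduct $\mathcal P^I$ removes only rules that never fire), but it is grounded differently at the crucial step. The paper dispatches the equilibrium side in one stroke by invoking the result of~\cite{CHN2013b} that, since $\mathcal P$ corresponds to a positive program, $\mathcal{MM}(\mathcal P)$ is the \emph{only} equilibrium of $M$; with uniqueness in hand, both directions reduce to checking the integrity constraints and the behaviour of the mapping rules under the reduct. You instead derive the equilibrium condition from first principles: because each context is a $\Ctx$-context with $\ACC(\kb)=\{\kb^\vdash\}$, equilibria are exactly the fixpoints of the immediate-consequence operator of the constraint-free program, i.e.\ its supported models. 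This is why you are forced into the caveat that the right-to-left direction only works for the least fixpoint (grounded equilibrium): a positive cycle among intensional or mapping predicates does yield non-minimal supported models that are plain equilibria. That caveat is precisely what the appeal to~\cite{CHN2013b} is meant to discharge in the paper, so your worry is not a gap in your argument but a fair observation about what the cited uniqueness claim must be carrying; if one accepts that claim, your fixpoint analysis can be replaced by it and no reinterpretation of ``equilibrium'' is needed. Your treatment of the second equality also differs usefully: the paper argues from $\mathcal{MM}(\mathcal P)=\mathcal{MM}(\mathcal P^I)$ that no mapping rules are (effectively) removed, using that only mapping rules can derive $\Sigma^i_M$-atoms, and argues the converse via ``body false or head present''; you instead prove outright that $I=\mathcal{MM}(\mathcal P)$ already entails $I=\mathcal{MM}(\mathcal P^I)$, since removed rules have heads outside $I$, hence bodies not contained in $I$, hence never fire in the bottom-up computation. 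This redundancy observation is slightly stronger than what the paper states and makes the bookkeeping between the two packagings of the hypotheses cleaner.
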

\begin{proof}
  Since $\mathcal P$ corresponds to a positive program, the only equilibrium of $M$ is
  $\mathcal{MM}(\mathcal P)$ (see~\cite{CHN2013b}).
  Furthermore, for any $I$, $\mathcal{MM}(\mathcal P^I)$ includes the facts in all
  extensional databases and satisfies all rules in $R_i$ and all integrity constraints.
  Thus, it also corresponds to a belief state satisfying their counterparts in $M$.

  Suppose that $\mathcal{MM}(\mathcal P)=\mathcal{MM}(\mathcal P^I)$.
  Since mapping rules are the only ones that can add information about relations in
  $\Sigma^i_M$ to $I$, the second equality implies that no mapping rules are removed in
  $\mathcal P^I$.
  Therefore $I=\mathcal{MM}(\mathcal P)$ satisfies all bridge rules of $M$ obtained from
  the mapping rules in $\mathcal P$, whence $I$ is an equilibrium of $M$ satisfying all
  integrity constraints.

  Conversely, if $I$ is an equilibrium of $M$ and $r$ is a mapping rule, then either $I$
  does not satisfy the body of $r$ or $I$ contains its head.
  Since no other rules can infer instances of relations in $\Sigma^i_M$, this implies that
  $\mathcal{MM}(\mathcal P)=\mathcal{MM}(\mathcal P^I)$, and being an equilibrium implies
  that $I=\mathcal{MM}(\mathcal P)$.
\qed
\end{proof}

The MCS representation has an interesting connection with the notion of weak model in
general, though: if there are integrity constraints in $M$ that are not satisfied by
$\mathcal{MM}(\mathcal P)$, then repairing $M$ by removing mapping rules is equivalent to
finding a weak model for $\mathcal P$.
This is again reminescent of the view-update problem.

The MCS representation allows us to write seemingly more powerful integrity constraints
over a P2P system, as we can use literals from different contexts in the same rule.
However, this does not give us more expressive power: for example, the integrity constraint
$\leftarrow(1:a),(2:b)$
can be written as
$\leftarrow(1:a),(1:b_2)$
adding the mapping rule $(1:b_2)\leftarrow(2:b)$, where $b_2$ is a fresh relation in peer
$1$.

%

\subsection{Description Logic Knowledge Bases}

We now discuss the connection between our work and results on adding
integrity constraints to description logic knowledge bases, namely OWL ontologies.

Description logics differ from databases in their rejection of the closed-world
assumption, thereby contradicting the semantics of negation-by-failure.
For this reason, encoding ontologies as a context in an MCS is a bit different than the
previous examples.
We follow the approach from~\cite{CGN2014}, refering the reader to the discussion therein
of why the embeddding from e.g.~\cite{Brewka2007b} is not satisfactory.

\begin{definition}
  A description logic $\mathcal L$ is represented as the relational logic
  $L_{\mathcal L}=\langle\KB_{\mathcal L},\BS_{\mathcal L},\ACC_{\mathcal L},\Sigma_{\mathcal L}\rangle$
  defined as follows:
  \begin{itemize}
  \item $\KB_{\mathcal L}$ contains all well-formed knowledge bases
    (including a T-Box and an A-Box) of~$\mathcal L$;
  \item $\BS_{\mathcal L}$ is the set of all possible A-Boxes in the language of $\mathcal L$;
  \item $\ACC_{\mathcal L}(\kb)$ is the singleton set containing the set of $\kb$'s known
    consequences (positive and negative);
  \item $\Sigma_{\mathcal L}$ is the signature underlying~$\mathcal L$.
  \end{itemize}
\end{definition}
Regarding the choice of acceptable belief sets (the elements of
$\BS_{\mathcal L}$), the possible A-Boxes
correspond to (partial) models of $\mathcal L$,
seen as a first-order theory: they contain concepts and roles applied
to particular known individuals, or negations thereof.
However, they need not be categorical:
they may contain neither $C(a)$ nor $\neg C(a)$ for particular $C$ and $a$.
This reflects the typical open-world semantics of ontologies.
In particular, the only element of $\ACC_{\mathcal L}(\kb)$ may not be a model of
$\kb$ in the classical sense of first-order logic.
This is in contrast with~\cite{Brewka2007b}, where $\ACC_{\mathcal L}(\kb)$ contains all
models of $\kb$; as discussed in~\cite{CGN2014}, this is essential to model e.g.~default
reasoning correctly.

\begin{definition}
  An ontology $\mathcal O$ based on description logic $\mathcal L$ induces a context with
  underlying logic $L_{\mathcal L}$, knowledge base $\mathcal O$, and an empty set of
  bridge rules.
\end{definition}

Like in the database scenario, ontologies viewed as MCSs always have one equilibrium, as
long as they are logically consistent.
Therefore, the notions of weak and strong satisfaction of integrity constraints again
coincide, and we get the same notion of consistency w.r.t.~a set of integrity constraints
as that defined in~\cite{Motik2009}; however, our syntax is more restricted, as we do not
allow general formulas as integrity constraints.
Observe that, as in that work, our integrity constraints only apply to named individuals
(explicitly mentioned in the ontology's A-Box), which is a desirable consequence that yet
again can only be gained from keeping integrity constraints separate from the knowledge
base.

\begin{example}
  We illustrate the construction in this section with a classical
  example.
  We assume that we have an ontology $O$ including a concept
  $\mathsf{person}$ and a role $\mathsf{hasCPR}$, which associates
  individuals with their CPR number.
  (So we are essentially resetting Example~\ref{ex:cpr} to use an
  ontology, rather than a distributed database.)
  We can add the integrity constraint
  \[\leftarrow(O:\mathsf{person}(x)),\pnot(O:\mathsf{hasCPR}(x,y))\]
  requiring each person to have a CPR number.
  Due to the semantics of ontologies, this actually requires each
  person's CPR number to be explicitly present in the ontology: the
  presence of an axiom such as
  $\mathsf{person}\sqsubseteq(\exists\mathsf{person.hasCPR})$ does not
  yield any instance $\mathsf{hasCPR}(x,y)$ in the set of the
  ontology's known consequences.
  This also justifies our definition of $\ACC_{\mathcal L}$: if we
  take the model-based approach of~\cite{Brewka2007b}, then this
  integrity constraint no longer demands the actual presence of such a
  fact in the A-Box.

  This integrity constraint is an example of one that does not satisfy
  the safety condition (the variable $y$ occurs only in a negated
  literal), but as discussed in Section~\ref{sec:ic} our theory is
  easily extended to cover this case, as $y$ only occurs once in the formula.
\end{example}

Our scenario is also expressive enough to model the distributed ontology scenario
of~\cite{Fang2011}, which defines integrity constraints as logic
programming-style rules with empty head whose body can include atoms from different
ontologies: we can simply consider the MCS obtained from viewing each ontology as a
separate context, and the integrity constraints as ranging over the
joint system.

\section{Repairs and managed multi-context systems}
\label{sec:repair}

The definitions in the previous section allow us to distinguish between acceptable and
non-acceptable equilibria w.r.t.\ a set of integrity constraints, but they do not help
with the analog of the problem of database repair~\cite{Abiteboul1988} -- namely, given an
inconsistent equilibrium for a given MCS, how do we change it into a consistent one.
In order to address this issue, we turn our attention to
\emph{managed} multi-context systems (mMCS)~\cite{Brewka2011}.

\begin{definition}
  A \emph{managed multi-context system} is a collection of managed contexts
  $\{C_i\}_{i\in\mathcal J}$, with each
  $C_i=\langle L_i,\kb_i,\br_i,D_i,OP_i,\mng_i\rangle$ as follows.
  \begin{itemize}
  \item $L_i$ is a relational logic, $\kb_i$ is a knowledge base, and
    $D_i$ is a set of import domains, as in standard MCSs.
  \item $OP_i$ is a set of operation names.
  \item $\br_i$ is a set of managed bridge rules, with the form of
    Equation~\eqref{eq:bridge}, but where $s$ is of the form $o(p)$
    with $o\in OP_i$ and $p\in\bigcup\KB_i$.
  \item $\mng_i:\wp(OP_i\times\bigcup\KB_i)\times\KB_i\to\KB_i$ is a
    \emph{management function}.
  \end{itemize}
\end{definition}

The intuition is as follows: the heads of bridge rules can now contain
arbitrary actions (identified by the labels in $OP_i$, and the
management function specifies the semantics of these labels --
see~\cite{Brewka2011} for a more detailed discussion.
Our definition is simplified from those authors', as they allow the
management function to change the semantics of the contexts and return
several possible effects for each action.
This simplification results in a less flexible concept of mMCS, which is however more useful for the
purposes of defining repairs.

\begin{example}
  The management function can perform several
  manipulations of the knowledge base in one update action.
  For example, considering the setting of Example~\ref{ex:cpr}, we could include
  an operation ${\sf replace}\in OP_{\sf CPR}$ such that
  $\mng(\{\langle{\sf replace},{\sf person}({\it Id},{\it Name},{\it Add})\rangle\},\kb)$
  inserts the tuple $({\it Id},{\it Name},{\it Add})$ into the ${\sf person}$ table and
  removes any other tuple $({\it Id},{\it Name}',{\it Add}')$ from that table.
\end{example}

Every MCS (in the sense of the previous section) can be seen as an
mMCS by taking every context to have exactly one operation
${\sf add}$ with the natural semantics of adding its argument (the
head of the rule) to the belief set associated with the context in
question.
We will therefore discuss integrity constraints over mMCS in the
remainder of this section.
The motivation of generalizing database tradition also suggests that
we include another operation ${\sf remove}$ that removes an element
from the specified context.

\begin{definition}
  Let $M=\{C_i\}_{i\in\mathcal I}$ be an mMCS.
  An \emph{update action} for $M$ is of the form
  $(i:o(p))$, with $i\in\mathcal J$, $o\in OP_i$ and $p\in\bigcup\KB_i$.

  Given a set of update actions $\U$ and an mMCS $M$, the
  result of applying $\U$ to $M$, denoted $\U(M)$,
  is computed by replacing each $\kb_i$ (in context $C_i$) by
  $\mng_i(\U_i,\kb_i)$, where $\U_i$ is the set of update actions of
  the form $(i:o(p))$.
\end{definition}

Updates differ from applying (managed) bridge rules, as they actually change one or more
knowledge bases in $M$'s contexts \emph{before} any evaluation of bridge rules
takes place.
This is similar to database updates, which change the database before and independent of
the query processing.
Based on this notion of update, we can define (weak) repairs as follows.

\begin{definition}
  Let $M$ be an mMCS, $\eta$ be a set of ICs over $M$, and assume that $M$ is inconsistent
  w.r.t.~$\eta$.
  A set of update actions $\U$ is a \emph{weak repair} for $M$ and $\eta$ if $\U(M)$ is
  consistent w.r.t.~$\eta$.
  If there is no subset $\U'$ of $\U$ that is also a weak repair for $M$ and $\eta$, then
  $\U$ is a \emph{repair}.
\end{definition}

\begin{example}
  Again in the setting of Example~\ref{ex:cpr}, suppose that the CPR database contains the record
  ${\sf person}(1111111118,{\it old\_lady},{\it odense})$ and the Silkeborg electoral
  database contains the records ${\sf voter}(1111111118)$ and
  ${\sf address}({\it gjern})$, but not the record ${\sf address}({\it odense})$ as Odense
  is not in Silkeborg.
  The induced mMCS is inconsistent w.r.t.~the integrity constraint~\eqref{ic1},
  and a possible repair is
  $\{({\sf CPR}:{\sf add}({\sf person}(1111111118,{\it old\_lady},{\it gjern})))\}$.
  The semantics of the management function guarantee that only the new record will persist
  in the mMCS.
\end{example}

As is the case in databases, it can happen that a set of integrity constraints is
inconsistent, in the sense that no MCS can satisfy it.
However, this inconsistency can also arise from incompatibility between integrity
constraints and bridge rules -- consider the very simple case where there is a bridge rule
$(B:{\sf b}) \leftarrow (A:{\sf a})$ and an integrity constraint
$\leftarrow(A:{\sf a}),\pnot(B:{\sf b})$).
Since our notion of update does not allow one to change bridge rules, this inconsistency
is unsurmountable.

In general, this interaction between integrity constraints and bridge rules makes the
problem of finding repairs for inconsistent MCSs more complex than in the database world.
However, Theorems~\ref{thm:wdec} and~\ref{thm:sdec} show that the problem of finding a
repair for an MCS that is inconsistent w.r.t.~a set of integrity constraints can be
reduced to finding a set of update actions that will make a logically inconsistent MCS
have equilibria.
The results on diagnosing and repairing logical inconsistency in multi-context
systems~\cite{Eiter2011,Eiter2014} can therefore be used to tackle this problem.
By considering deductive databases as MCSs, we also see the problem of repairing an
inconsistent MCS as a generalization of the view-update
problem~\cite{Kakas1990,Mayol2003,Teniente1995}.

Another issue is how to choose between different repairs: as in the database case, some
repairs are preferable to others.
Consider the following toy example.
\begin{example}
  Let $M$ be the MCS induced by a deductive database with one extensional relation $\sf p$
  and one intensional relation $\sf q$, both $0$-ary, connected by the rule
  $\sf q\leftarrow\sf p$, and consider the integrity constraint $(I:\sf q)$.

  Assume the usual operations ${\sf add}$ and ${\sf remove}$.
  There are two repairs for $M$, namely $\{(E:\sf add(\sf p))\}$ and
  $\{(I:\sf add(\sf q))\}$, but only the former is valid from the perspective of deductive
  databases.
\end{example}

The usual consensus in databases is that, in general, deciding which repair to apply is a
task that needs human intervention~\cite{Eiter1992}.
However, several formalisms also include criteria to help automate such preferences.
In our setting, a simple way to restrict the set of possible repairs would be to restrict
the update actions to use only a subset of the $OP_i$s -- in the case of deductive databases,
we could simply restrict them to the operations over $C_E$.
An alternative that offers more fine-tuning capabilities would be to go in the direction
of active integrity constraints~\cite{Flesca2004}, which require the user to be explicit
about which update actions can be used to repair the integrity constraints that are not
satisfied.
We plan to pursue the study of such formalisms to discuss repairs of MCSs with integrity
constraints in future work.
We also intend to study generalizations of repairs to include the possibility of changing
bridge rules.

\section{Conclusions and Future Work}
\label{sec:conclusion}

In this paper, we proposed a notion of integrity constraint for multi-context systems, a
general framework for combining reasoning systems.
We showed that our notion generalizes the well-studied concept of integrity constraint
over databases, and studied its relation to similar notions in other formalisms.
Satisfaction of integrity constraints comes in two variants, weak and strong, related to
the usual concepts of brave and cautious reasoning.

By showing how to encode integrity constraints within the syntax of MCSs,
we obtained decidability and complexity results for the problem of whether a particular
MCS weakly or strongly satisfies a set of integrity constraints, and of repairing it in
the negative case.
We argued however that by keeping integrity constraints as an added layer on top of an MCS we
are able to separate intrinsic logical inconsistency from inconsistencies that may arise
e.g.~from improper changes to an individual context, which we want to detect and fix,
rather than propagate to other contexts.
Our examples show that we indeed capture the usual behaviour of integrity constraints in
several existing formalisms.

We also defined a notion of repair, consistent with the tradition in databases, and identified
new research problems related to which repairs should be preferred that arise in the MCS
scenario.
We intend to pursue this study further by developing a theory of active integrity
constraints, in the style of~\cite{Flesca2004}.

\paragraph{Acknowledgements.}
We would like to thank Gra\c ca Gaspar for introducing us to
the exciting topic of integrity constraints and for many fruitful
discussions.
We also thank the anonymous referees for many valuable suggestions
that improved the overall quality of this paper.
This work was supported by the Danish Council for Independent
Research, Natural Sciences, and by FCT/MCTES/PIDDAC under centre grant
to BioISI (Centre Reference: UID/MULTI/04046/2013).

\bibliographystyle{plain}
\bibliography{bibl}

\begin{thebibliography}{10}

\bibitem{Abiteboul1988}
S.~Abiteboul.
\newblock Updates, a new frontier.
\newblock In M.~Gyssens, J.~Paredaens, and D.~van Gucht, editors, {\em
  ICDT'88}, volume 326 of {\em LNCS}, pages 1--18. Springer, 1988.

\bibitem{Abiteboul1995}
S.~Abiteboul, R.~Hull, and V.~Vianu.
\newblock {\em Foundations of Databases}.
\newblock Addison Wesley, 1995.

\bibitem{Chomicki1999}
M.~Arenas, L.E. Bertossi, and J.~Chomicki.
\newblock Consistent query answers in inconsistent databases.
\newblock In V.~Vianu and C.H. Papadimitriou, editors, {\em PODS 1999}, pages
  68--79. {ACM} Press, 1999.

\bibitem{Asirelli1985}
P.~Asirelli, M.~de Santis, and M.~Martelli.
\newblock Integrity constraints for logic databases.
\newblock {\em J. Log. Program.}, 2(3):221--232, 1985.

\bibitem{Brewka2007b}
G.~Brewka and T.~Eiter.
\newblock Equilibria in heterogeneous nonmonotonic multi-context systems.
\newblock In {\em AAAI2007}, pages 385--390. AAAI Press, 2007.

\bibitem{Brewka2011b}
G.~Brewka, T.~Eiter, and M.~Fink.
\newblock Nonmonotonic multi-context systems: {A} flexible approach for
  integrating heterogeneous knowledge sources.
\newblock In M.~Balduccini and T.~Cao Son, editors, {\em Logic Programming,
  Knowledge Representation, and Nonmonotonic Reasoning - Essays Dedicated to
  Michael Gelfond on the Occasion of His 65th Birthday}, volume 6565 of {\em
  LNCS}, pages 233--258. Springer, 2011.

\bibitem{Brewka2011}
G.~Brewka, T.~Eiter, M.~Fink, and A.~Weinzierl.
\newblock Managed multi-context systems.
\newblock In T.~Walsh, editor, {\em IJCAI 2011}, pages 786--791. IJCAI/AAAI,
  2011.

\bibitem{Caroprese2012}
L.~Caroprese, I.~Trubitsyna, M.~Truszczynski, and E.~Zumpano.
\newblock The view-update problem for indefinite databases.
\newblock In L.~Fari{\~n}as~del Cerro, A.~Herzig, and J.~Mengin, editors, {\em
  JELIA 2012}, volume 7519 of {\em LNCS}, pages 134--146. Springer, 2012.

\bibitem{Caroprese2007}
L.~Caroprese and E.~Zumpano.
\newblock Consistent data integration in {P2P} deductive databases.
\newblock In H.~Prade and V.S. Subrahmanian, editors, {\em SUM 2007}, volume
  4772 of {\em LNCS}, pages 230--243. Springer, 2007.

\bibitem{Caroprese2014}
L.~Caroprese and E.~Zumpano.
\newblock Dealing with incompleteness and inconsistency in {P2P} deductive
  databases.
\newblock In B.C. Desai, A.M. Almeida, J.~Bernardino, and E.~Ferreira~Gomes,
  editors, {\em IDEAS 2014}, pages 124--131. {ACM}, 2014.

\bibitem{LCF2014}
L.~Cruz{-}Filipe.
\newblock Optimizing computation of repairs from active integrity constraints.
\newblock In C.~Beierle and C.~Meghini, editors, {\em FoIKS 2014}, volume 8367
  of {\em LNCS}, pages 361--380. Springer, 2014.

\bibitem{KMIS2015}
L.~Cruz-Filipe, M.~Franz, A.~Hakhverdyan, M.~Ludovico, I.~Nunes, and
  P.~Schneider-Kamp.
\newblock {repAIrC}: A tool for ensuring data consistency by means of active
  integrity constraints.
\newblock In A.L.N. Fred, J.L.G. Dietz, D.~Aveiro, K.~Liu, and J.~Filipe,
  editors, {\em KMIS}, pages 17--26. SciTePress, 2015.

\bibitem{CGN2014}
L.~Cruz-Filipe, G.~Gaspar, and I.~Nunes.
\newblock Information flow within relational multi-context systems.
\newblock In K.~Janowicz, S.~Schlobach, P.~Lambrix, and E.~Hyv{\"{o}}nen,
  editors, {\em EKAW 2014}, volume 8876 of {\em LNAI}, pages 97--108. Springer,
  2014.

\bibitem{CHN2013b}
L.~Cruz-Filipe, R.~Henriques, and I.~Nunes.
\newblock Description logics, rules and multi-context systems.
\newblock In K.~McMillan, A.~Middeldorp, and A.~Voronkov, editors, {\em
  LPAR-19}, volume 8312 of {\em LNCS}, pages 243--257. Springer, December 2013.

\bibitem{DaoTran2010}
M.~Dao{-}Tran, T.~Eiter, M.~Fink, and T.~Krennwallner.
\newblock Distributed nonmonotonic multi-context systems.
\newblock In F.~Lin, U.~Sattler, and M.~Truszczynski, editors, {\em {KR} 2010}.
  {AAAI} Press, 2010.

\bibitem{Eiter2011}
T.~Eiter, M.~Fink, G.~Ianni, and P.~Sch{\"{u}}ller.
\newblock The {IMPL} policy language for managing inconsistency in
  multi-context systems.
\newblock In H.~Tompits, S.~Abreu, J.~Oetsch, J.~P{\"{u}}hrer, D.~Seipel,
  M.~Umeda, and A.~Wolf, editors, {\em {INAP}/{WLP} 2011}, volume 7773 of {\em
  LNCS}, pages 3--26. Springer, 2011.

\bibitem{Eiter2014}
T.~Eiter, M.~Fink, P.~Sch{\"{u}}ller, and A.~Weinzierl.
\newblock Finding explanations of inconsistency in multi-context systems.
\newblock {\em Artif. Intell.}, 216:233--274, 2014.

\bibitem{Eiter1992}
T.~Eiter and G.~Gottlob.
\newblock On the complexity of propositional knowledge base revision, updates,
  and counterfactuals.
\newblock {\em Artif. Intell.}, 57(2-3):227--270, 1992.

\bibitem{Fang2011}
M.~Fang, W.~Li, and R.~Sunderraman.
\newblock Maintaining integrity constraints among distributed ontologies.
\newblock In {\em {CISIS} 2011}, pages 184--191. {IEEE}, 2011.

\bibitem{Fink2011}
M.~Fink, L.~Ghionna, and A.~Weinzierl.
\newblock Relational information exchange and aggregation in multi-context
  systems.
\newblock In J.P. Delgrande and W.~Faber, editors, {\em LPNMR 2011}, volume
  6645 of {\em LNCS}, pages 120--133. Springer, 2011.

\bibitem{Flesca2004}
S.~Flesca, S.~Greco, and E.~Zumpano.
\newblock Active integrity constraints.
\newblock In E.~Moggi and D.~Scott~Warren, editors, {\em PPDP 2004}, pages
  98--107. ACM, 2004.

\bibitem{Gallaire1984}
H.~Gallaire, J.~Minker, and J.-M. Nicolas.
\newblock Logic and databases: {A} deductive approach.
\newblock {\em {ACM} Comput. Surv.}, 16(2):153--185, 1984.

\bibitem{Goncalves2014}
R.~Gon{\c{c}}alves, M.~Knorr, and J.~Leite.
\newblock Evolving multi-context systems.
\newblock In T.~Schaub, G.~Friedrich, and B.~O'Sullivan, editors, {\em {ECAI}
  2014}, volume 263 of {\em Frontiers in Artificial Intelligence and
  Applications}, pages 375--380. {IOS} Press, 2014.

\bibitem{Kakas1990}
A.C. Kakas and P.~Mancarella.
\newblock Database updates through abduction.
\newblock In D.~McLeod, R.~Sacks-Davis, and H.-J. Schek, editors, {\em VLDB
  1990}, pages 650--661. Morgan Kaufmann, 1990.

\bibitem{Mayol2003}
E.~Mayol and E.~Teniente.
\newblock Consistency preserving updates in deductive databases.
\newblock {\em Data Knowl. Eng.}, 47(1):61--103, 2003.

\bibitem{Motik2009}
B.~Motik, I.~Horrocks, and U.~Sattler.
\newblock Bridging the gap between {OWL} and relational databases.
\newblock {\em Web Semantics: Science, Services and Agents on the World Wide
  Web}, 7(2), 2011.

\bibitem{Pourebrahimi05asurvey}
B.~Pourebrahimi, K.~Bertels, and S.~Vassiliadis.
\newblock A survey of peer-to-peer networks.
\newblock In {\em ProRISC 2005}, 2005.

\bibitem{Eiter2010}
J.~P{\"{u}}hrer, S.~Heymans, and T.~Eiter.
\newblock Dealing with inconsistency when combining ontologies and rules using
  dl-programs.
\newblock In L.~Aroyo, G.~Antoniou, E.~Hyv{\"{o}}nen, A.~ten Teije,
  H.~Stuckenschmidt, L.~Cabral, and T.~Tudorache, editors, {\em {ESWC}(1)
  2010}, volume 6088 of {\em LNCS}, pages 183--197. Springer, 2010.

\bibitem{Schlipf1995}
J.S. Schlipf.
\newblock Complexity and undecidability results for logic programming.
\newblock {\em Annals of Mathematics and Artificial Intelligence},
  15(3--4):257--288, 1995.

\bibitem{Chomicki2010}
S.~Staworko and J.~Chomicki.
\newblock Consistent query answers in the presence of universal constraints.
\newblock {\em Inf. Syst.}, 35(1):1--22, 2010.

\bibitem{Tasharrofi2014}
S.~Tasharrofi and E.~Ternovska.
\newblock Generalized multi-context systems.
\newblock In C.~Baral, G.~de~Giacomo, and T.~Eiter, editors, {\em {KR} 2014}.
  {AAAI} Press, 2014.

\bibitem{Teniente1995}
E.~Teniente and A.~Oliv{\'e}.
\newblock Updating knowledge bases while maintaining their consistency.
\newblock {\em VLDB J.}, 4(2):193--241, 1995.

\bibitem{Ullman1988}
J.D. Ullman.
\newblock {\em Principles of Database and Knowledge-Base Systems, Volume {I}}.
\newblock Computer Science Press, 1988.

\bibitem{Vardi1982}
M.Y. Vardi.
\newblock The complexity of relational query languages (extended abstract).
\newblock In H.R. Lewis, B.B. Simons, W.A. Burkhard, and L.H. Landweber,
  editors, {\em STOC 1982}, pages 137--146. {ACM}, 1982.

\bibitem{Weinzierl2011}
A.~Weinzierl.
\newblock Advancing multi-context systems by inconsistency management.
\newblock In S.~Bragaglia, C.~Dam{\'{a}}sio, M.~Montali, A.D. Preece, C.J.
  Petrie, M.~Proctor, and U.~Straccia, editors, {\em RuleML2011@BRF Challenge},
  volume 799 of {\em {CEUR} Workshop Proceedings}. CEUR-WS.org, 2011.

\end{thebibliography}

\end{document}